\newif\ifNOTES \NOTESfalse   
\newif\ifPOPL \POPLtrue   
\newif\ifweak\weakfalse   
\newif\iffull\fullfalse
\newif\iflong\longfalse 
\begin{document}

\title{An abstract account of  up-to techniques for inductive behavioural relations}
\author{Davide Sangiorgi}

\institute{University of Bologna and INRIA}

\maketitle

\newcommand{\newpageDS}{}

 \begin{abstract} 
`Up-to techniques' represent enhancements of the coinduction proof method and are widely
used on coinductive behavioural relations such as bisimilarity.  Abstract formulations
of these coinductive techniques exist, using fixed-points or category theory.

A proposal has been recently put forward \cite{Sangiorgi22} for transporting the
enhancements onto the concrete realms of \emph{inductive} behavioural relations, i.e.,
relations defined from inductive observables, such as traces or enriched forms of traces.
The abstract meaning of such `inductive enhancements', however,  has not been explored.  In
this paper, we review the theory, and then propose an abstract account of it, using
fixed-point theory in complete lattices.
\end{abstract}

\section{Introduction}
 \label{s:intro}

The recent  paper  \cite{Sangiorgi22} presents 
 an attempt  at transporting 
the well-known theory of 
 enhancements for  \emph{coinductive}  behavioural relations
such as bisimilarity  (the so-called \emph{`up-to techniques'})
onto 
the realms of \emph{inductive} behavioural relations, i.e.,  relations defined from 
inductive
 observables, such as traces or enriched forms of traces
 \cite{Gla01}.
Abstract formulations
of the theories of  coinductive enhancements
exist, using 
fixed-points or category theory~\cite{PousS19}. 
However, 
the abstract
meaning of the `inductive enhancements'~\cite{Sangiorgi22}
 is  unclear, both because the observables are inductive and because
of the coinductive flavour of some of the constructions at the heart of the theory. 
In this paper we  give an overview of the concrete theory in 
 \cite{Sangiorgi22},  and then propose an abstract account 
of it, using  fixed-point theory in complete lattices.

The motivation  for transporting the bisimulation  enhancements 
 onto
 inductive  
behavioural  relations  is twofold. 
(We focus on 
the  bisimulation enhancements, 
as bisimilarity is  the most important 
 coinductive behavioural relation and its enhancements are, by far, the most widely used.)
First, the bisimulation enhancements  can be
very effective in proofs. The enhancements build on the bisimulation proof method,
whereby, 
to prove two terms bisimilar, one finds a bisimulation relation, 
 that is, a relation
invariant under the observables of the language (i.e., what can be
observed of the terms). 
In the enhancements, the invariance is allowed to hold `up-to' certain functions on the
relation.
Examples of such `up-to functions'  include  `up-to context', 
`up-to transitive closure',
`up-to bisimilarity', `up-to environment' \cite{PousS19}.
Most important,  the theories of enhancements
allow one to \emph{combine} enhancements 
so to obtain, for free, the soundness of more complex   enhancements
\cite{San98MFCS,pous07:clut,pous:lics16:cawu}. 
Sometimes  the
 enhancements seem essential to be able to carry out a proof:
  defining a full  bisimulation, with all needed pairs,  can be considerably
hard, 
let alone carrying out the necessary checks (see e.g., \cite{San93d}).  
This is frequent 
 in   languages for name
mobility stemming from the $\pi$-calculus 
 and in languages including higher-order features 
 such as $\lambda$-calculi.
Indeed, in these languages bisimilarity is hardly ever applied without
enhancements. 

The other major motivation for transporting the bisimulation enhancements onto inductive
behavioural relations has to do with
the criticism that has been raised against bisimilarity.
To begin with, 
 bisimilarity  is sometimes regarded as {too fine},  discriminating
processes that an external observer could not  tell apart. 
Thus, for instance, in the CSP
community failure equivalence  \cite{BrookesHR84} is used in place of bisimilarity. 
Another argument against bisimilarity is that it does not have a natural associated
{preorder}.  For instance, similarity~--- the `one-way' bisimilarity~---
or variants of it, do not
yield bisimilarity as their induced equivalence \cite{SanbookINTRO}. 
Further, 
 similarity as a behavioural preorder is often inadequate
because it does not respect deadlocks. 
Various  inductive  behavioural relations, both preorders and equivalences, have been put
forward and studied  
that improve on
 such  limitations: examples are preorders based on traces, failures,    ready sets, refusals,    may
 and must testing,  ready and failure traces,  see, e.g.,
 \cite{BrookesHR84,BrookesR84,DeHe84,Pnu85,Pomello85,BaetenBK87,OlderogH86,Phillips87,Hen88,Gla01}.  
We call these relations and their enhancements `inductive' because
their  observables are inductively defined, though this represents
an  abuse of terminology
as only the observables are inductive objects.

In abstract formulations of coinduction 
using
fixed-point theory in complete lattices, 
the coinductive objects are greatest fixed-points of monotone
functions.  
The coinductive enhancements can then be explained by means of 
composition of 
these functions with auxiliary functions 
(the 'up-to' functions). Appropriate conditions on the composition of the two functions
guarantee that the post-fixed points of the  composed function remain below the
initial coinductive object \cite{SanPous}.



In the abstract presentation of inductive enhancements  in this paper, 
a  difference 
is that  one reasons about 
  meets (i.e., greatest lower bounds) of  chains of  decreasing  points in
the lattice,
starting from the top element. Then, the functions of interest, the  
\emph{valid} functions for a given 
chain,  are those  whose post-fixed
points are below the points of the chain~-- they need not be functions that produce the
points of the chain 
by iteration of the  
application of the function to the top element. 
Valid functions  are then  enhanced by means of 
\emph{\ap
functions}. These are, intuitively functions that 
`respect the approximants' (i.e., the points in the chain). They
 are the analogous of the `up-to functions' of the coinductive setting.

In the concrete setting of 
inductive behavioural relations
 \cite{Sangiorgi22},
the inductive observables
are  described by means of 
{modal formulas}, 
precisely subsets of   the positive fragment of Hennessy-Milner logic
with the possible addition of  
 atomic  observables. 
A \emph{weight} 
can thus  be    associated to
 each
observable, intuitively  expressing  the depth of the nesting of actions 
in the behaviour
of a process  that may have be looked at when checking that observable. 
Approximants are then obtained by placing  bounds on the weights of the  observables employed.
The enhancements proposed in \cite{Sangiorgi22} are functions $\FF$ 
on relations that 
are  \emph{weight-preserving} for a given preorder $\preS$, i.e., whenever a relation
$\R$ complies with the $n$-th approximant  $\preSN n$ of the preorder,
 then also 
$\ff\R$  complies (i.e., 
${\R } \subseteq {\preSN n}$
implies  
${\ff \R} \subseteq {\preSN n}$). 
Such enhancements
are used within 
\emph{\quasiprogrS} between process relations.
A  \quasiprogr from  $\R$ to   $\S$ 
indicates a match in the immediate actions from pairs of processes in
$\R$  (akin to the match required in a simulation relation)
in which the derivatives (i.e., the processes resulting
from performing such actions)  are
in  $\S$. 
The soundness theorem states that, for any  function $\FF$ that is \WP
for a preorder $\preS$, 
a 
  \quasiprogr from $\R$ to $\FF(\R)$ 
allows us to conclude 
that the relation $\R$ is  contained in 
$\preS$.

When relating this concrete setting to the abstract one 
 using fixed-points in complete lattices  briefly
described above:
   the set of all process relations is the complete lattice;
 the sequence of weight-based approximants of a
behavioural preorder is a 
 chain of decreasing points in the lattice; the  preorder itself is the meet of the chain; 
the \quasiprogrS yield  valid functions; and the \WP functions are the \ap functions.

The theory in this paper, as well as that in  \cite{Sangiorgi22}, 
is  parametrised on a set of inductive observables, that is, a preorder. 
See  \cite{Sangiorgi22} for examples of  proofs of results about behavioural preorders   that 
are \emph{parametric} on  a  preorder, hence applicable to  a number of preorders.  
In this paper we only consider `strong' semantics, i.e., Labeled Transition Systems in
which  all actions are equally visible.   

\paragraph{Structure of the paper} Sections~\ref{s:strong} and~\ref{s:frs}
collect the main background.  
Section~\ref{s:cl} presents the abstract setting based on complete
lattices, post-fixed points, and chains of approximants.
Finally,  Section~\ref{s:concl}  reports the main related work and
some possible directions for future work.

     \newpageDS  

\section{Inductive observables and weights}
\label{s:strong}

In this section we recall the definitions of 
inductive observables and preorders from \cite{Sangiorgi22}, and their approximants using weights. 
We  work on ordinary  \emph{Labeled Transition Systems} (LTSs), i.e., 
triples  $(\pr, \Act, \longrightarrow )$ with domain $\pr$, set of {\em actions\/}
(or {\em labels}) $\Act$, and transition relation ${\longrightarrow } \subseteq  
{\pr \times \Act \times \pr}$. We use $P,Q$ and $R$
to range over $\pr$ and call them {\em processes};  $\mymu$ 
ranges  over
the  actions.
 As usual  $P \arr\mu Q$ stands for 
${(P, \mu , Q)} \in {\longrightarrow }$,
 to be interpreted as ``$P$ may   become
$Q $  by performing an action $\mu$''.  
We write 
$P \arr\mu $ if  there is $P'$  with $P \arr\mu P'$, and 
$P \notarr \mu$ if there is no $P'$ such that $P \arr\mu P'$.  
We use:  $n$ to range over the natural numbers; 
  letters $I$ and $J$  for  countable indexing sets
 in unions,  conjunctions, and similar; 
$i, j$ to range over
countable sets.

\subsection{Inductive  observables}
\label{ss:io}

The observables  are described by means of 
modal formulas. The operators  include the `diamond' $\act \mu. \obs$,
to detect the possibility of performing the action $\mu$, 
 a (possibly
infinitary) 'and', to 
permit
multiple observations, 
and a set of atomic  observables. 
Without the atomic observables, this is the positive fragment of Hennessy-Milner logic 
\cite{HeMi85}. 

\begin{definition}
\label{d:obs}
Let $\Act$ and $\AT$  be   (countable) sets of \emph{actions} 
and of  \emph{atomic observables}. 
A 
 \emph{family of observables  (over $\Act$ and $\AT$)}  is  a subset 
$\Obs$   
of  the formulas  inductively generated by the following grammar:
\iffull
\[ 
\begin{array}{rcll}
\obs & := &    \act \mu . \obs  
\hskip 2cm 
 &  \mbox{(\rn{act})} \\
& 
\midd 
& 
\bigwedge_{j\in J} \obsI j
 &  \mbox{(\rn{conj})} \\
&   
\midd 
& 
   \aat
  &  \mbox{(\rn{at})} 
s\end{array} 
 \] 
\else
\[ 
\begin{array}{rcll}
\obs & := &    \act \mu . \obs  
\midd 
\bigwedge_{j\in J} \obsI j
\midd 
   \aat
\end{array} 
 \] 
\fi
where
    $\mymu  \in \Act$,  $\aat  \in \AT$,  
$J$ is a countable set, and:
\begin{itemize}
\item (downward-closure)  if $\act\mu. \obs \in \Obs $ then also  
$ \obs \in \Obs $, and similarly if 
$\bigwedge_{j\in J} \obsI j \in \Obs$ then also $\obsI j \in \Obs$ for
each $j$.
\end{itemize} 

We write $\true$  
for the formula corresponding to 
the empty conjunction, and $\obs_1 \wedge \obs_2$ for a binary conjunction.    
We let $\Obs$ range over  families of observables,
and 
 $\obs,\obsZ$  over observables. 
\qedD\end{definition}

The observables of a family $\Obs$ 
are meant to be interpreted as predicates
over the processes  of an LTS over the same set of actions, 
 and accordingly we write $\sat P \obs$ when the observable $\obs$ holds  on the process $P$.
The interpretation is defined
structurally over the observables, writing $\set\aat$ for the set
of processes on which the atomic observable $\aat$ holds:  
\[ 
\shortinfrule{at}{ P \in  \set{\aat}} 
{\sat P  \aat }
\hskip 1cm 
\shortinfrule{act}{ 
 P \arr\mymu  P'  \andalso   \sat {P'}{ \obs} 
}
{\sat P {   \act \mu . \obs} }
\hskip 1cm 
\shortinfrule{conj}{
\mbox{ 
$\sat P {\obsI j} $  \hskip .5cm for all $j \in J$
}}
{\sat P    {\bigwedgeD_{j\in J} \obsI j}
}
\]
The observable 
\true\  is   satisfied by all processes.
An example of  atomic observable
is 
   the refusal  $\refu \mu$,
which  holds of a process $P$ and an action $\mu$  if $P \notarr \mu$.
All atomic  observables 
should be   
\emph{local}:  
they can be checked  by looking only
at the immediate outgoing transitions emanating from 
a process. That is, for each such observable, say $\aat$, there are sets
of actions $A$ and $B$ such that for all process $P$, we have 
$ P \in  \set{\aat}$ 
iff  ( $P \arr \mu$ for all $\mu \in A$, and $P \notarr
\mu$ 
for all $\mu \in B$). 
 For instance, for the refusal $\refu \mu$, we have $A = \emptyset $ and $B = \{ \mu\}$.

\begin{definition}[preorder induced by $\Obs$]
\label{d:ie}
A  family $\Obs$  of observables   induces  a preorder $\preSM\Obs$ on
the processes, 
 whereby $ P \preSM\Obs Q $ if
$\sat P \theta $ implies
$\sat Q \theta $  for all $\theta \in \Obs$.
\iffull 
In this case we refer to  $\Obs$ as \emph{the observables for
$ \preSM\Obs$}.  
\fi
\qedD\end{definition}

The   equivalence induced by a preorder is obtained by 
 adding the  symmetric 
requirement on pairs of related  processes. 
All results below for preorders can be extended to equivalences in the expected manner. 
 We will  ignore equivalences in the remainder of the paper.


By varying 
the sets of observables, one 
can  capture  all inductive preorders and equivalences in van Glabbeek's
spectrum \cite{Gla01}
(collecting the main such relations in the literature), 
namely: 
\emph{trace}, \emph{failure}, \emph{failure trace}, \emph{ready}, \emph{ready trace}
preorders and equivalences \cite{BrookesHR84,BrookesR84,OlderogH86,Pnu85,Pomello85,BaetenBK87,Den87}. 
Assuming 
{image-finiteness} of the LTSs, meaning  that 
 for all $P$ and $\mu$,
 the set $\{ {P'}  \st {P}  \arr\mu {P'} 
\}$ 
is finite, the set of such preorders also include the  
\emph{may}, \emph{must}, and \emph{testing} preorders  \cite{DeHe84,Hen88}, 
and the  \emph{refusal preorder}  \cite{Phillips87}.
We refer to \cite{Sangiorgi22} for the details of the characterisations. 
The operators for  observables in 
Definition~\ref{d:obs}   also capture
  some  coinductive relations, such
as  simulation, and refinements of it
such as ready simulation \cite{Sangiorgi22}.

  \newpageDS \subsection{Weights}
\label{ss:we}

We are interested in observables that can be measured, that is,
 whenever $\sat P \obs $ holds then  a weight can be assigned to a proof 
of $\sat P \obs $. 
Given  
an LTS
 $\LL$   and a family $\Obs$ of observables over the same set of actions, 
a process  $P$  of  $\LL$,   and an observable $\obs \in \Obs$, 
 an assertion  
$\satN n P {\obs}$ 
 holds  if 
$\sat  P {\obs}$ 
and 
  the depth of nesting of diamond
operators in $\obs$ is not greater than $n$ (See \cite{Sangiorgi22} for a
definition in terms of rules).
For instance, 
we have $\satN n  P{\act{a}. \act b.\true} $, for 
any $n\geq 2$, whereas 
 $\satN 1  P{\act{a}. \act b.\true}$ does not hold.

\begin{definition}
\label{d:obsWobs}
A \obssys $\Obs$ and the corresponding preorder 
$\preSM\Obs$
are 
  \emph{measurable} 
if, for any LTS (over the same set of actions) 
and process $P$ of the LTS, and for any observable 
$\obs \in \Obs$, 
whenever $\sat P \obs$ then  
 also $\satN n P \obs$ holds, for some 
$n \geq 0$.
\qedD\end{definition}

The introduction of weights allows us to define approximations of the behavioural preorders.

\begin{definition}[stratification]
\label{d:strati}
Two processes $P,Q$ are in the \emph{$n$-th approximant of $\preSM \Obs$}, written
 $P \preSMN \Obs n Q$, if 
 $\satN n  P {\obs}$  implies  
$\sat Q \obs $, for all $\obs$.
 \qedD\end{definition}

In
the above definition,
$\sat Q \obs $  could be replaced by $\satN n  Q { \obs} $. 
Relations $ \preSMN \Obs n $ are preorders.

\begin{lemma}
\label{l:stra}
If
  $
\Obs$ is measurable,
then 
${\preSM \Obs} = {\cap_n \preSMN \Obs n}$.
\end{lemma}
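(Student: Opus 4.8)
The plan is to prove the set equality ${\preSM \Obs} = {\cap_n \preSMN \Obs n}$ by establishing the two inclusions separately; only one of them will require the measurability hypothesis.

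For the inclusion ${\preSM \Obs} \subseteq {\cap_n \preSMN \Obs n}$ I would fix an arbitrary $n$ and a pair with $P \preSM\Obs Q$, and show $P \preSMN\Obs n Q$. Unfolding Definition~\ref{d:strati}, this amounts to checking that $\satN n P \obs$ implies $\sat Q \obs$ for every $\obs$. Since, by the definition of the weighted satisfaction, the assertion $\satN n P \obs$ entails the plain assertion $\sat P \obs$, the conclusion $\sat Q \obs$ follows immediately from $P \preSM\Obs Q$. Note that this direction holds for every $n$ and uses no measurability at all: imposing the weight bound only strengthens the hypothesis of the implication, so each approximant is coarser than the preorder itself.

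The reverse inclusion ${\cap_n \preSMN \Obs n} \subseteq {\preSM \Obs}$ is where measurability enters. Suppose $P \preSMN\Obs n Q$ holds for all $n$, and fix an observable $\obs$ with $\sat P \obs$; I must derive $\sat Q \obs$. The key step is to promote the unweighted hypothesis $\sat P \obs$ to a weighted one: by measurability (Definition~\ref{d:obsWobs}) there is some $n \geq 0$ with $\satN n P \obs$. Applying the $n$-th approximant hypothesis $P \preSMN\Obs n Q$ to this very $\obs$ then yields $\sat Q \obs$, as required. Since $\obs$ was arbitrary, $P \preSM\Obs Q$.

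The crux — and the only place the hypothesis is used — is this promotion of $\sat P \obs$ to $\satN n P \obs$ for some finite $n$. I expect this to be the one point deserving care, since without measurability it could fail: an observable of unbounded diamond-nesting depth (for instance an infinitary conjunction of formulas with growing numbers of nested diamonds) could be satisfied by $P$ yet admit no finite weight, so it would be invisible to every approximant $\preSMN\Obs n$ while still being able to separate processes under $\preSM\Obs$. Measurability is precisely the assumption that excludes this, guaranteeing that the decreasing chain $(\preSMN\Obs n)_n$ has $\preSM\Obs$ as its meet rather than some strictly larger relation.
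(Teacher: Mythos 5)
Your proof is correct and matches the paper's argument, which is simply stated as immediate from the definition of measurability: the inclusion ${\preSM \Obs} \subseteq {\cap_n \preSMN \Obs n}$ follows because $\satN n P \obs$ entails $\sat P \obs$, and the converse inclusion uses measurability exactly as you do, to promote $\sat P \obs$ to $\satN n P \obs$ for some finite $n$. Your closing remark correctly identifies measurability as the only non-trivial ingredient.
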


\iffull
\begin{proof}
Immediate, from the definition of measurable,  Lemma~\ref{l:PQn}, and Remark~\ref{r:PQn}.   
\end{proof}   
\fi

All preorders discussed  at the end of  Section~\ref{ss:io}   are measurable.
In the remainder, it is intended that 
 all \obssys and all preorders
 are measurable.

     \newpageDS \section{Functions for inductive enhancements}
\label{s:frs}

We recall here the concrete definitions of the 
functions for inductive enhancements, namely the \WP functions \cite{Sangiorgi22}.
Unless otherwise stated, a \emph{relation} is meant to be a   (binary) relation on
processes.
We let $\R$ and $\S$ range over  relations.
The
union of  
relations $\R$ and $\S$ is  
$\RR \cup \SS$,  
 and their composition is 
$\R\S$   (i.e.,   
 $ \rmemb P  {\, \R  \S} { P'}$ holds if 
 for some $P''$,
 both 
$\rmemb P \RR {P''}$ and $\rmemb{P''}\SS{ P'}$ hold).
  We often use the infix notation
for relations; hence   ${P} \RR {Q}$  means  $(P,Q) \in \R$.  

The definition below of \quasiprogr is
essentially the ``one-way''
version of the  progressions in the theory of
enhanced coinduction \cite{San98MFCS}.  
The converse clause is missing, 
as
we are  dealing with preorders,
rather than equivalences as in enhanced coinduction. 

\begin{definition}[\quasiprogr]
  \label{d:diag}
  Given two relations $\R$ and $\S$, we say that $\R$ \emph{\quasiprogrs
    to} $\S$, written $\Lprog\R\S$, if $ P\RR Q$ implies:
  \begin{itemize}
  \item whenever $P \arr\mymu P'$, there exists $Q'$ such that $Q\arr\mymu Q'$
    and $P'\SS Q'$.
\qed
\end{itemize} 
\end{definition}

We need \quasiprogrS
of the form  $\Lprog\R{\app\qff\R}$
where $\qff$ 
 is  a  function 
on process relations.
Below, 
$\qff$ and $\qg$ range over such functions.
To handle
  atomic observables such as 
refusals,  the following
  \compatibility requirement is needed.

\begin{definition}[\compatibility]
\label{d:compa}
A relation $\R$ is \emph{\compatible \cfor $\Obs$} (or simply
\emph{\compatible}, if $\Obs$ is clear) if, 
for all atomic observables  $\aat$ in $\Obs$ and 
 for all    $P \RR Q$,
\iffull
\begin{itemize}
\item
\fi
$\sat P \aat$  implies
$\sat Q \aat $.
\iffull
\end{itemize} 
\fi
\qedD\end{definition} 


Separating \compatibility  and (semi-)progression
allows us
 to maintain the standard notion of progression in
the literature, which only refers to action
transitions. 
Moreover, in this way
 progressions and \quasiprogrS  
are defined   without reference to any family of  
 observables.

\begin{definition}[soundness]
A function  $\qff$  is  {\em  sound for $\preSM \Obs$}  if
$\Lprog\R{\ff\R}$  implies  ${\R }\subseteq  {\preSM \Obs}$, 
for any \compatible $\R$.
\qedD\end{definition}
 
Not all functions  are sound. An example is  the function that maps
every relation onto the universal relation (the set of all pairs of processes).
One  then looks for a class of sound functions
 for which membership
is easy to check,  
which includes interesting functions,  and which  satisfies
interesting properties.
For this  the notion of \WP function is proposed \cite{Sangiorgi22}. 

\begin{definition}[\WP function]
\label{d:observable_pres}
A   function $\qff$  \emph{\OP on $\preSM \Obs$}
  when, 
for all $n$ and for all  ${\R}$, if 
$\R \subseteq {\preSMN \Obs n}$, then also ${ \ff \R}
\subseteq {\preSMN \Obs n}$.  
\qedD\end{definition}

Any \WP function is indeed sound, as stated in the following theorem.  A direct proof of the theorem is
presented   in   \cite{Sangiorgi22}; in this paper we will derive it from the  fixed-point
theory of Section~\ref{s:cl}.

\begin{theorem}[soundness of \WP functions for  preorders]
\label{t:op}
If 
$\qff$ \OP\ on $\preSM \Obs$, then $\qff$ is sound for  $\preSM \Obs$. 
\end{theorem}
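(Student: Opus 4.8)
The plan is to reduce the statement to a statement about the weight-based approximants and then argue by induction on the weight. By Lemma~\ref{l:stra} we have $\preSM\Obs = \cap_n \preSMN\Obs n$, so to conclude $\R \subseteq \preSM\Obs$ it suffices to show $\R \subseteq \preSMN\Obs n$ for every $n$. I would therefore fix a \compatible relation $\R$ with $\Lprog\R{\ff\R}$ and prove $\R \subseteq \preSMN\Obs n$ by induction on $n$. For the base case $n=0$, an observable of weight $0$ contains no diamond, so by the grammar of Definition~\ref{d:obs} it is a (possibly infinitary) conjunction of atomic observables. Hence $\satN 0 P \obs$ merely asserts that $P$ satisfies a conjunction of atomic observables, and \compatibility of $\R$ transports each such atomic observable from $P$ to $Q$ whenever $P \RR Q$; this yields $\R \subseteq \preSMN\Obs 0$.

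For the inductive step I would assume the outer hypothesis $\R \subseteq \preSMN\Obs n$ and immediately feed it to the \WP hypothesis to obtain ${\ff\R} \subseteq {\preSMN\Obs n}$. To establish $\R \subseteq \preSMN\Obs{n+1}$, take $P \RR Q$ and prove, by a secondary \emph{structural induction on the observable} $\obs$, that $\satN{n+1} P \obs$ implies $\sat Q \obs$. The atomic case is discharged by \compatibility and the conjunction case by the structural induction hypothesis; in neither does the weight bound change. The decisive case is $\obs = \act\mu.\obs'$: here $\satN{n+1} P {\act\mu.\obs'}$ unfolds, via the rule for $\act$, to some $P'$ with $P \arr\mu P'$ and $\satN n {P'} {\obs'}$, the diamond-depth dropping from $n+1$ to $n$. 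The \quasiprogr $\Lprog\R{\ff\R}$ then supplies $Q'$ with $Q \arr\mu Q'$ and $(P',Q') \in {\ff\R}$; since ${\ff\R} \subseteq {\preSMN\Obs n}$, we get $P' \preSMN\Obs n Q'$, and from $\satN n {P'} {\obs'}$ we conclude $\sat {Q'} {\obs'}$, hence $\sat Q {\act\mu.\obs'}$. This closes the structural induction, giving $\R \subseteq \preSMN\Obs{n+1}$, and the outer induction then delivers $\R \subseteq \preSMN\Obs n$ for all $n$.

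The hard part is not any single calculation but the correct alignment of the two nested inductions, and in particular seeing why the outer induction must be on the weight $n$. The one-step match furnished by \quasiprogr lands the derivatives in $\ff\R$ rather than in $\R$, so the argument would stall were it not for the \WP property, which is precisely what converts the outer hypothesis $\R \subseteq \preSMN\Obs n$ into ${\ff\R} \subseteq {\preSMN\Obs n}$; and it is essential that peeling off one diamond lowers the required weight from $n+1$ to $n$, so that the derivative pair need only be compared at the smaller approximant already controlled by the outer hypothesis. Once this interplay between \quasiprogr, \WP, and the drop in weight is set up, the remaining cases are routine and Lemma~\ref{l:stra} finishes the proof by intersecting over all $n$.
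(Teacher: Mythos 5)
Your proof is correct, but it follows a genuinely different route from the paper: it is the \emph{direct} proof, essentially the one the paper says is given in \cite{Sangiorgi22} and deliberately avoids here. The paper instead derives Theorem~\ref{t:op} from the abstract machinery of Section~\ref{s:cl}: it packages the \quasiprogr clause together with the \compatibility condition into the monotone functional $\sobs\Obs$, proves that $\sobs\Obs$ is valid for the chain $\{\preSMN\Obs n\}_n$ (Theorem~\ref{t:sobs_valid}), observes that the \WP property on $\preSM\Obs$ coincides with the \ap property for that chain in the lattice $\clpr$, and then concludes by two generic facts: $\comp{\B}{\FF}$ is valid whenever $\B$ is valid and $\FF$ is \ap (Theorem~\ref{t:valid_comp}), and every post-fixed point of a valid function lies below the meet of the chain (Theorem~\ref{t:valid}), which equals $\preSM\Obs$ by Lemma~\ref{l:stra}. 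Unfolding these lemmas recovers exactly your argument: Theorem~\ref{t:valid} is your outer induction on $n$; Theorem~\ref{t:valid_comp} is your step converting $\R\subseteq{\preSMN\Obs n}$ into ${\ff\R}\subseteq{\preSMN\Obs n}$ via the \WP hypothesis; and the proof of Theorem~\ref{t:sobs_valid} is your inner induction on the derivation of $\sat P\obs$, with the same three cases and the same crucial drop from weight $n+1$ to $n$ in the diamond case. What the paper's route buys is reusability --- validity, \ap, and the closure properties of Proposition~\ref{p:ap} are stated once for arbitrary chains in arbitrary complete lattices --- while your route is self-contained and makes explicit where each hypothesis is consumed. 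One small divergence: you discharge the base case $n=0$ by transporting diamond-free observables along \compatibility, whereas the paper arranges matters so that $\preSMN\Obs 0 = \Pr\times\Pr$ is the top of the chain and the base case is vacuous; both are consistent with the informal description of weights given here, and your version is the safe one if atomic observables are assigned weight $0$.
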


Thus the proof technique stemming from the theory above goes as
follows. 
Let
$\preSM \Obs$ be the preorder resulting from the   
 set of observables  $ \Obs$. 
Take a function
$\qff$ that \OP.
Then for any 
 relation  $\R$ that  is \compatible \cfor $\Obs$, whenever 
$\Lprog\R{\ff\R}$,  we also have   ${\R }\subseteq  {\preSM \Obs}$.

A number of examples of \WP functions are  reported in   \cite{Sangiorgi22}:
 the identity function, various useful constant functions, 
 the transitive-closure function, the closure under contexts (for specific languages). 
Moreover, the set of \WP functions is closed under useful constructors such
as  
function composition,  union, chaining (that yields  relational composition); these allow
one to derive a number of more sophisticated \WP functions, e.g.,  
 the transitive-closure function, 
the closure under $\preSM \Obs$ 
 (analogous of the `up-to bisimilarity' enhancement for
bisimilarity),  
the closure under contexts
and $\preSM \Obs$ 
 (analogous of the `up-to context and bisimilarity' enhancement for
bisimilarity). 

     \newpageDS

\section{Complete lattices and chains of approximants}
\label{s:cl}

In this section we present the abstract account for the enhancements of inductive
behavioural relations, using complete lattices, chains of approximants, and post-fixed
points. 

\subsection{Post-fixed points and chains}
\label{ss:pc}

We recall that  a \emph{poset} $L$
is a non-empty set
equipped with  a relation $\leq$ on its elements that is
a partial order (that is, reflexive,  antisymmetric,  and
transitive). 


\begin{definition}
\label{d:poset_FUN}
 Let  $\FF$ be an endofunction
on  a poset $L$.
 \begin{itemize}
 \item
 $\FF$ 
 is \emph{monotone}
 if $x \leq y$ 
 implies $\FF(x) \leq \FF(y)$, for all $x,y$. 

 \item 
 An element  
 $x$ of the  poset 
 is a 
 \emph{post-fixed point} of $\FF$
  if
 $x \leq \FF(x)$.

\item 
 A
 \emph{fixed point} of $\FF$  is an element 
$x$ with    $\FF(x) = x$.
  \qed
 \end{itemize}
\end{definition} 

The \emph{greatest lower bound} of a subset $S \subseteq L$ 
is also called the 
\emph{meet} of $S$; dually,  
the
\emph{least upper bound} of $S$
is also called the 
\emph{join} of $S$.
We write  $\cup_{x_n} x_n$ for the join of  a subset $\{x_n\}_n$ of elements of  the poset, and 
similarly $\cap_{x_n} x_n$  for its meet.

\begin{definition}[Complete lattice]
A \emph{complete lattice} is  
a poset   with all joins (i.e., all the 
 subsets of the  poset  have a  join). \qed
\end{definition}
The above  implies that a {complete lattice} has also all meets. 
Further,  it implies
that there are  bottom and top elements,
which will be   indicated
 by
$\bot$ and $\top$, respectively.
The key ingredient for our abstract theory of up-to techniques are 
 {chains of approximants} in  complete lattices.

\begin{definition}
\label{d:ca}
In a complete lattice, a \emph{chain of approximants}  is a  sequence of points
$\{x_n\}_n$, $n\geq 0$ in the lattice with  $x_0 = \top$ and $x_{n+1}\leq x_n$, for all $n$. 
\end{definition}  
We wish to reason about meets of  chains of approximants.  As we shall see,  these meets represent inductive
behavioural relations, obtained as the intersection of their approximants. 
We sometimes simply call \emph{chain} a  chain of approximants.

\begin{definition}
\label{d:valid}
An endofunction $\B$ on a complete lattice $L$
 is \emph{valid for a chain of approximants} $\{x_n\}_n$ in $L$ if, for all
$y \in L$ and $ n \geq 0$, 
if  \mbox{ $y \leq x_n$  then $\B(y) \leq x_{n+1}$. } 
\end{definition}  

Intuitively, $\B$ is valid for a chain
  if an  application of $\B$ allows us to move (at least) one step
down  in the  chain. 
Below, for readability  we often omit reference to the complete lattice. 
It is intended that all points and functions are points and endofunctions on a  lattice.

Valid functions for a chain allow us to reason about the meet of  
the chain.

\begin{theorem}
\label{t:valid}
If $\B$ is valid for the chain  $\{x_n\}_n$, then for any post-fixed point $z$ of $\B$ we  have 
$z \leq  \cap_n x_n$.
\end{theorem}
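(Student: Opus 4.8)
The plan is to prove the stronger pointwise statement that $z \leq x_n$ for every $n \geq 0$, by induction on $n$, and then to conclude by the defining property of the meet.

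First I would establish the base case $n = 0$. Since $x_0 = \top$ is the top element of the complete lattice, the inequality $z \leq x_0$ holds for any element whatsoever, in particular for the post-fixed point $z$.

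For the inductive step I would assume $z \leq x_n$ as induction hypothesis and derive $z \leq x_{n+1}$. The key move is to chain two facts. On the one hand, validity of $\B$ for the chain, instantiated at the point $y = z$ (which satisfies $y \leq x_n$ by hypothesis), yields $\B(z) \leq x_{n+1}$. On the other hand, $z$ is a post-fixed point of $\B$, so $z \leq \B(z)$. Transitivity of $\leq$ then gives $z \leq \B(z) \leq x_{n+1}$, completing the step. Having shown $z \leq x_n$ for all $n$, I would finish by observing that $z$ is a lower bound of the set $\{x_n\}_n$, so by definition of the meet as greatest lower bound we obtain $z \leq \cap_n x_n$, as required.

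As for the main difficulty: there is essentially none of substance, since the argument is a clean induction whose only genuine content lies in the correct coupling of the two hypotheses in the inductive step. The one point worth stating carefully is the index bookkeeping when validity is invoked: the induction hypothesis $z \leq x_n$ must feed the antecedent $y \leq x_n$ so that validity delivers its conclusion at $x_{n+1}$, and the post-fixed property $z \leq \B(z)$ then transfers this bound back onto $z$ itself. This synchronisation of the single downward step of $\B$ with the descent of the chain is exactly what the definition of \emph{valid} was designed to provide, so the proof requires no hypotheses beyond those stated.
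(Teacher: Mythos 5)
Your proof is correct and follows essentially the same route as the paper's: induction on $n$ with base case $z \leq x_0 = \top$, and an inductive step that combines validity of $\B$ (giving $\B(z) \leq x_{n+1}$) with the post-fixed point property $z \leq \B(z)$. The only addition is your explicit final appeal to the greatest-lower-bound property of the meet, which the paper leaves implicit.
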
 

\begin{proof}
We show that $z \leq   x_n$, for each $n$, using induction on $n$. 
For $n=0$, this follows from  $x_0=\top$. 
Otherwise, suppose $z\leq x_n$; by validity of $\B$ we derive $\B(z) \leq x_{n+1}$. As $z$ is
a post-fixed point of $\B$, we have $z\leq \B(z)$. We can thus conclude that also $z\leq 
x_{n+1}$.
\qed \end{proof}

We seek ways of enhancing valid functions, i.e., obtain other valid functions (with the
intention of obtaining, concretely,  
'larger' functions). For this we use the \ap functions.

\begin{definition}
\label{d:ap}
An endofunction $\FF$ 
 is \emph{\ap for a chain of approximants} $\{x_n\}_n$ 
if, for all
$y 
$ and $ n \geq 0$, 
if  \mbox{ $y \leq x_n$  
then   also $\FF(y) \leq x_{n}$. } 
\end{definition}  

Intuitively, application of \ap functions respects the level of approximation.
We write $ \comp {\FF_1}{\!\FF_2}$ 
for 
the composition of the two functions, where $(\comp{ \FF_1 }{\!\FF_2}) (z) =
\FF_1(\FF_2(z))$). 

\begin{theorem}
\label{t:valid_comp}
If $\B$ is valid,  and $\FF$ is \ap,  for the same chain $\{x_n\}_n$, 
then also $ \comp {\B}{\!\FF}$ 
is  valid for  $\{x_n\}_n$, which means that  $\B$ is indeed valid.
\end{theorem}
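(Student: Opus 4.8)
The plan is to verify the definition of validity (Definition~\ref{d:valid}) for the composite $\comp{\B}{\!\FF}$ directly, chaining the two hypotheses in the correct order. Concretely, I would fix an arbitrary point $y$ and an index $n \geq 0$ and assume $y \leq x_n$; the goal is then to establish $(\comp{\B}{\!\FF})(y) \leq x_{n+1}$.

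First I would invoke the hypothesis that $\FF$ is \ap for $\{x_n\}_n$. Since $y \leq x_n$, Definition~\ref{d:ap} gives $\FF(y) \leq x_n$, so applying $\FF$ keeps us at the same level of the chain. I would then feed this point into the validity of $\B$: the point $\FF(y)$ lies below $x_n$, so Definition~\ref{d:valid} yields $\B(\FF(y)) \leq x_{n+1}$, moving one step down. Recalling the convention $(\comp{\B}{\!\FF})(y) = \B(\FF(y))$, this is exactly $(\comp{\B}{\!\FF})(y) \leq x_{n+1}$. As $y$ and $n$ were arbitrary, $\comp{\B}{\!\FF}$ satisfies Definition~\ref{d:valid}, hence is valid for $\{x_n\}_n$.

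I do not expect any genuine obstacle here: the argument is a single pass through the two definitions, and it requires neither monotonicity of the functions nor any fixed-point machinery. The only point demanding care is the order of application, dictated by the composition convention: one must apply the \ap function $\FF$ first (to remain at level $n$) and only afterwards the valid function $\B$ (to descend to level $n+1$). If the composition were taken the other way around, the first application of $\B$ would land at level $n+1$ while $\FF$ only guarantees preservation of a level, and the argument would not close; so the asymmetry between Definitions~\ref{d:valid} and~\ref{d:ap} is exactly what makes the proof go through. Finally, combined with Theorem~\ref{t:valid}, validity of $\comp{\B}{\!\FF}$ means that every post-fixed point of $\comp{\B}{\!\FF}$ lies below $\cap_n x_n$, which is the abstract counterpart of the soundness of the enhanced proof technique.
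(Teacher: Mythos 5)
Your proof is correct and is essentially identical to the paper's: both fix $y \leq x_n$, apply the \ap property of $\FF$ to stay at level $n$, and then apply validity of $\B$ to descend to $x_{n+1}$. Your remarks on why the order of composition matters are accurate but not needed for the argument itself.
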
 

\begin{proof}
Suppose $y \leq x_n$, for some $n$ and $y$. As $\FF$ is \ap, we have $ \FF(y) \leq x_n$; 
and since $\B$ is valid, $\B(\FF(y)) \leq x_{n+1}$.
\qed
\end{proof}

The set of \ap functions is closed under composition, join, and contains a few useful
constant functions.

\begin{proposition}
\label{p:ap}
For any chain of approximants $\{x_n\}_n$,

\begin{itemize}
\item the identity function is \ap;
\item for any $y$ with $y \leq \cap_n x_n$ the constant function mapping each  point onto
  $y$ is \ap;
\item if $\FF_1$ and $\FF_2 $ are \ap then so is $\comp{\FF_1}{\FF_2}$; 

\item if $\{ \FF_j\}_j$ is a set of \ap functions, then $\cup_j \FF_j$, i.e., the function
  that maps a point $z$ onto $\cup_j \FF_j(z)$ is \ap too.
\end{itemize} 
\end{proposition}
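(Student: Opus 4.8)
The plan is to verify each of the four clauses by directly unfolding Definition~\ref{d:ap}. Throughout, I fix a chain of approximants $\{x_n\}_n$ and recall that an endofunction $\FF$ is \ap for it exactly when $y \leq x_n$ implies $\FF(y) \leq x_n$, for all $y$ and all $n \geq 0$. Each clause then reduces to chaining this implication with an elementary order-theoretic fact, so I would treat the four items in turn.

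For the identity function the conclusion $\mathrm{id}(y) = y \leq x_n$ is literally the hypothesis, so there is nothing to check. For a constant function $\FF$ sending every point to a fixed $y$ with $y \leq \cap_n x_n$, I would observe that the hypothesis $z \leq x_n$ plays no role: since $\cap_m x_m \leq x_n$ (the meet lies below each element of the family), transitivity gives $\FF(z) = y \leq \cap_m x_m \leq x_n$. This is where I use that the meet of the chain is below each of its points.

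For composition, I assume $\FF_1$ and $\FF_2$ are \ap and suppose $y \leq x_n$. Applying the \ap property of $\FF_2$ yields $\FF_2(y) \leq x_n$; feeding this into the \ap property of $\FF_1$ yields $\FF_1(\FF_2(y)) \leq x_n$, which is exactly $(\comp{\FF_1}{\FF_2})(y) \leq x_n$. For the join of a family $\{\FF_j\}_j$ of \ap functions, I suppose $y \leq x_n$; then $\FF_j(y) \leq x_n$ for every $j$, so $x_n$ is an upper bound of the set $\{\FF_j(y)\}_j$, and since $\cup_j \FF_j(y)$ is by definition its \emph{least} upper bound, it too lies below $x_n$.

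The only clause that uses the lattice structure in an essential way is the last, where I rely on the universal property of the join (that the least upper bound sits below every upper bound); that is the step I would flag as the substantive one. The remaining clauses are pure transitivity arguments and appeal to nothing beyond the partial order, so I expect no real obstacle there.
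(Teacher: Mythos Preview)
Your proof is correct; each of the four clauses is handled by a direct unfolding of Definition~\ref{d:ap} together with the appropriate order-theoretic fact, and nothing is missing. The paper itself states Proposition~\ref{p:ap} without proof, so there is no argument to compare against beyond noting that your verification is exactly the routine check the paper presumably intends the reader to supply.
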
 

\subsection{Monotone functions}
\label{ss:mono}

The conditions of validity and \ap can be simplified if we require the functions to be
monotone.

\begin{proposition}
\label{p:mon-val}
If $\B$ is monotone, then, for any 
 chain of approximants $\{x_n\}_n$,
$\B$ is valid for $\{x_n\}_n$ iff, for all $n$, we have $ \B(x_n) \leq  x_{n+1}$.
\end{proposition}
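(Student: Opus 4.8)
The plan is to prove the two implications of the biconditional separately, noting at the outset that monotonicity is needed only for one of them.

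For the forward implication I would assume $\B$ is valid for $\{x_n\}_n$ in the sense of Definition~\ref{d:valid} and derive $\B(x_n) \leq x_{n+1}$ for each fixed $n$. This is immediate: by reflexivity of $\leq$ we have $x_n \leq x_n$, so instantiating the validity condition at $y = x_n$ yields $\B(x_n) \leq x_{n+1}$ directly. No use of monotonicity is made here, so this direction in fact holds for arbitrary $\B$.

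For the backward implication I would assume $\B(x_n) \leq x_{n+1}$ for all $n$ and establish validity. Taking arbitrary $y$ and $n$ with $y \leq x_n$, monotonicity of $\B$ (Definition~\ref{d:poset_FUN}) gives $\B(y) \leq \B(x_n)$, and transitivity of $\leq$ together with the hypothesis $\B(x_n) \leq x_{n+1}$ then gives $\B(y) \leq x_{n+1}$. Since $y$ and $n$ were arbitrary, $\B$ is valid for $\{x_n\}_n$.

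I do not expect any genuine obstacle. The substance of the statement is that, for monotone $\B$, the universally quantified validity condition over all lower bounds $y$ of $x_n$ collapses to the single check against $x_n$, the greatest such lower bound. The only points requiring attention are to apply monotonicity in the correct direction (from $y \leq x_n$ to $\B(y) \leq \B(x_n)$) and to keep the chain indices aligned, with $x_n$ on the hypothesis side and $x_{n+1}$ on the conclusion side.
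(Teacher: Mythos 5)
Your proof is correct and follows essentially the same route as the paper's: the forward direction instantiates the validity condition at $y = x_n$ via reflexivity, and the backward direction combines monotonicity with transitivity exactly as the paper does. The added observation that monotonicity is only needed for the backward implication is accurate but does not change the argument.
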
 
\begin{proof}
If $\B$ is valid, then from $x_n \leq x_n$ we infer 
$ \B(x_n) \leq  x_{n+1}$.
Conversely, if $y \leq x_n$, then by monotonicity we derive 
$\B(y) \leq \B(x_n)$; then, from  $ \B(x_n) \leq  x_{n+1}$, also 
$\B(y) \leq x_{n+1}$.
\qed \end{proof}                

A similar result holds for \ap functions:
\begin{proposition}
\label{p:mon-ap}
If $\FF$ is monotone, then, for any 
 chain of approximants $\{x_n\}_i$,
$\FF$ is \ap for $\{x_n\}_n$ iff, for all $n$, we have $ \FF(x_n) \leq  x_{n}$.
\end{proposition}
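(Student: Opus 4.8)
The plan is to prove the two implications of the biconditional separately, following the same pattern as the proof of Proposition~\ref{p:mon-val} for valid functions, the only change being that the target approximant on the right-hand side is $x_n$ rather than $x_{n+1}$. First I would establish the forward implication: assuming $\FF$ is \ap for the chain, I instantiate the defining condition at the point $y = x_n$. Since $x_n \leq x_n$ by reflexivity of the order, the \ap property immediately yields $\FF(x_n) \leq x_n$, which is exactly the pointwise condition sought. This direction uses neither monotonicity nor any feature of the chain beyond reflexivity of $\leq$.

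For the converse I would assume $\FF(x_n) \leq x_n$ for all $n$ and verify the \ap condition directly. Given arbitrary $y$ and $n$ with $y \leq x_n$, monotonicity of $\FF$ gives $\FF(y) \leq \FF(x_n)$; chaining this with the hypothesis $\FF(x_n) \leq x_n$ via transitivity yields $\FF(y) \leq x_n$, as required. This is the only place monotonicity is used: it transports the bound established at the distinguished point $x_n$ down to an arbitrary $y$ lying below it.

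I do not expect a genuine obstacle here, as the statement is simply the \ap counterpart of the validity characterisation already proved, and its argument is a verbatim adaptation. The only point worth noting is the asymmetry between the two directions --- monotonicity entering only in the converse --- which mirrors exactly the situation in Proposition~\ref{p:mon-val}; I would therefore present both halves concisely rather than re-deriving the elementary order manipulations in full.
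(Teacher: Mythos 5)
Your proof is correct and matches the paper's approach exactly: the paper omits the proof of this proposition, presenting it as the evident analogue of Proposition~\ref{p:mon-val}, whose proof is precisely the argument you give (forward direction by reflexivity at $y=x_n$, converse by monotonicity and transitivity). Nothing further is needed.
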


\subsection{From complete lattices to behavioural relations}
\label{ss:clbr}
We now show that the results recalled in Section~\ref{s:frs} are instances of the
abstract setting based on fixed-points in Sections~\ref{ss:pc} and \ref{ss:mono}.

The complete lattice to be used 
in the case of behavioural relations on a set $\pr$ of processes
 is obtained from the powerset construction on   $\pr \times \pr$. 
Thus
  $\emptyset $ (the empty set) and $\pr \times \pr$
are the bottom  and top elements;  join is given by set union, and
meet by set intersection.  
We write $\clpr$ for this complete lattice.

We recall that the  behavioural preorders of  Section~\ref{s:strong} are
obtained as the meet of their approximants, as for each set   $\Obs$ of observables, we
have ${\preSM \Obs} = {\cap_n \preSMN \Obs n}$ (Lemma~\ref{l:stra}). 
The approximants 
$ \{ \preSMN \Obs n\}_n$ form a chain of approximants (Definition~\ref{d:ca}), as
   $\preSMN \Obs {0} = \Pr \times
\Pr$ and,   for
each $n$, we have  ${\preSMN \Obs {n+1}} \subseteq {\preSMN \Obs {n}}$.


Next, we need to express the
semi-progressions  of Section~\ref{s:frs}
as  functions $\sobs \Obs$ on the lattice, for a given set $\Obs$ of observables.
The only difference with respect to the
ordinary functional for simulation is the compliance condition: 
\[ 
\sobs \Obs (\R) \defi
\{ 
\begin{array}[t]{l}
(P,Q) \st  \\
\mbox{1. whenever $P \arr\mymu P'$, there exists $Q'$ s.t.\
  $Q\arr\mymu Q'$ and $P'\RR Q'$;}\\
\mbox{2. for all atomic observables  $\aat$ in $\Obs$ 
if $\sat P \aat$  then also 
$\sat Q \aat $.}  \}
\end{array} 
\]

\begin{theorem}
\label{t:sobs_valid}
For all $\Obs$,  function 
$\sobs \Obs$ is valid  for $ \{ \preSMN \Obs n\}_n$.
\end{theorem}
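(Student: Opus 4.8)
The plan is to unfold Definition~\ref{d:valid} for the concrete data at hand. Validity of $\sobs\Obs$ for the chain $\{\preSMN\Obs n\}_n$ means: for every relation $\R$ and every $n\geq 0$, if $\R\subseteq\preSMN\Obs n$ then $\sobs\Obs(\R)\subseteq\preSMN\Obs{n+1}$. So I would fix such an $\R$ and $n$, take an arbitrary pair $(P,Q)\in\sobs\Obs(\R)$, and prove $P\preSMN\Obs{n+1}Q$. By Definition~\ref{d:strati} this reduces to the single implication: for every observable $\obs\in\Obs$, $\satN{n+1}P\obs$ implies $\sat Q\obs$. I would establish this implication by structural induction on $\obs$, following the three productions of the grammar of Definition~\ref{d:obs}; the downward-closure condition of that definition is what guarantees that the subformulas met along the way are again members of $\Obs$, so that both the induction hypothesis and the approximant relations remain applicable to them. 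Note that the level $n$ stays fixed throughout the induction: only the diamond case will shift it, and there it is discharged directly by the hypothesis on $\R$ rather than by an inner induction.

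The three cases are as follows. For an atomic observable $\obs=\aat$, the diamond-nesting depth is $0$, so $\satN{n+1}P\aat$ amounts to $\sat P\aat$, and clause~2 of $\sobs\Obs$ (the \compatibility clause) yields $\sat Q\aat$ immediately. For a conjunction $\obs=\bigwedgeD_{j\in J}\obsI j$, the depth of the conjunction bounds that of each conjunct, so $\satN{n+1}P{\bigwedgeD_{j\in J}\obsI j}$ gives $\satN{n+1}P{\obsI j}$ for every $j$; the induction hypothesis then delivers $\sat Q{\obsI j}$ for every $j$, and rule~(\rn{conj}) concludes $\sat Q{\bigwedgeD_{j\in J}\obsI j}$. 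The diamond case $\obs=\act{\mymu}.\obs'$ is the crux. From $\satN{n+1}P{\act{\mymu}.\obs'}$ I extract, via rule~(\rn{act}), a transition $P\arr\mymu P'$ with $\satN{n}{P'}{\obs'}$: the diamond consumes exactly one unit of the weight budget, so a budget of $n+1$ at $P$ leaves $n$ for the derivative $P'$. Clause~1 of $\sobs\Obs$ supplies a matching transition $Q\arr\mymu Q'$ with $P'\RR Q'$; since $\R\subseteq\preSMN\Obs n$, I obtain $P'\preSMN\Obs n Q'$, and applying Definition~\ref{d:strati} to $\obs'$ (which lies in $\Obs$ by downward-closure) gives $\sat{Q'}{\obs'}$. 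Rule~(\rn{act}) then concludes $\sat Q{\act{\mymu}.\obs'}$, as required.

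The only genuine obstacle is the weight bookkeeping in the diamond case: I must check that the diamond-nesting depth drops by \emph{precisely} one, so that the derivative's obligation lands at level $n$, which is exactly the level at which the hypothesis $\R\subseteq\preSMN\Obs n$ can be invoked. This is the abstract ``one step down the chain'' of Definition~\ref{d:valid} made concrete: the diamond is where the progression clause is consumed and where the approximant index decreases, whereas atomic observables carry no weight and conjunctions are inert. As an alternative packaging, I could first observe that $\sobs\Obs$ is monotone (clause~1 is positive in $\R$ and clause~2 is independent of $\R$) and then appeal to Proposition~\ref{p:mon-val}, reducing the goal to $\sobs\Obs(\preSMN\Obs n)\subseteq\preSMN\Obs{n+1}$; but the direct induction above requires no auxiliary lemma and is no longer, so I would prefer it.
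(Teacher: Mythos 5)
Your proof is correct and follows essentially the same route as the paper's: the paper first invokes monotonicity of $\sobs\Obs$ and Proposition~\ref{p:mon-val} to reduce the goal to $\sobs\Obs(\preSMN\Obs n)\subseteq\preSMN\Obs{n+1}$, and then runs exactly the induction you describe (phrased as rule induction on the derivation of $\sat P\obs$ rather than structural induction on $\obs$, an immaterial difference here), with the same three cases and the same use of clause~1 plus the definition of $\preSMN\Obs n$ in the diamond case. The only divergence is that you prove the validity condition directly for an arbitrary $\R\subseteq\preSMN\Obs n$ instead of routing through Proposition~\ref{p:mon-val} --- a packaging choice you yourself flag --- so the two arguments are substantively identical.
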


\begin{proof}
We omit $\Obs$, abbreviating
$\preSMN \Obs n$ as 
$\preSN n$, and $\sobs \Obs$ as $\S$. 
The function $\S$ is monotone. Hence, by  Proposition~\ref{p:mon-val}, to show that 
$\S$ is valid for $ \{ \preSMN {} n\}_n$ it suffices to show that, 
for any $n$, we have 
${\S( \preSMN {} {n}) }\leq { \preSMN {} {n+1}}$.

Thus take any $n \geq 0$, and 
suppose $(P,Q) \in \S( \preSMN {} {n})$. We have to show that also 
$(P,Q) \in \ \preSMN {} {n+1}$.
For this, assume 
$\sat  P \obs$ and the depth of nesting of diamond operators in $\obs$ is not greater than $n+1$. We check that also 
$\sat  Q \obs$ holds. 
We proceed by rule induction on the proof of  $\sat  P{ \obs$}, using a case  analysis
on the last rule applied.

\begin{description}
\item[Rule \trans{at}]  For atomic observables, 
$\sat  Q \obs$ holds by the compliance condition on $\S$ (clause (2) in the definition of $\S$).

\item[Rule \trans{conj} ]
If the last rule applied is
\[
\shortinfrule{}{ 
 \sat  P{\obsI j}
 \mbox{ for     each $j$}
  } 
{\sat P      {\bigwedgeD_{j\in J} \obsI j}
}\]
then since, for each $j$, by the downward-closure property on $\Obs$ we have  
$\obsI j \in \Obs$. 
We can use  rule induction to infer
$\sat Q {\obsI j}  $, and then  conclude 
$\sat Q {\bigwedgeD_{j\in J} \obsI j}$.

\item[Rule \trans{act}]
Suppose that the last rule applied is
\[
\shortinfrule{}{ 
 P \arr\mymu  P'  \andalso   \sat {P'}{ \obs} 
}
{\sat P {   \act \mu . \obs}
}
\]
where,
by the downward-closure property on $\Obs$ we have  
$\obs \in \Obs$, 
and
 the depth of diamond operators in $\obs$ is not greater than $n$; 
that is, $ \satN {n} {P '}{\obs   }$ holds. 
From $ P \arr\mymu  P'$  
and  clause (1) of the definition  of $\S$, 
we infer that $Q \arr \mymu Q'$ with 
${P'} \:\preSN {n}\:  {Q'}$, for some $Q'$.  
From ${P'} \:\preSN {n}\:  {Q'}$ and  $\satN n  {P'}{ \obs}$, by
definition of 
$\preSN {n}$,
 we obtain $\sat{Q' }{ \obs }$. 
In summary, we can infer
\[
\shortinfrule{}{ 
 Q  \arr\mymu  Q'  \andalso   \sat {Q'}{  \obs}
}
{\sat Q   \act \mu . \obs
}\]
\qed
\end{description}
 \end{proof}

Finally, we notice that, on the lattice $\clpr$,  the definition of weight-preserving
function matches that of \ap functions. 
That is, a function is  
\OP on $\preSM \Obs$ precisely when it is 
\ap for  
$ \{ \preSMN {\Obs} n\}_n$.  

We thus derive the soundness of 
\WP
functions, i.e., Theorem~\ref{t:op}, stating that 
if 
$\qff$ \OP\ on $\preSM \Obs$, then $\qff$ is sound for  $\preSM \Obs$. 

 \begin{proof}[of Theorem~\ref{t:op}]
Soundness of  $\qff$   for  $\preSM \Obs$ means that any 
post-fixed point of 
 $\comp 
{\sobs \Obs} \FF$ is contained in  $\preSM \Obs$.

Since   \WP on $\preSM \Obs$ is the same as \ap for  
$ \{ \preSMN {\Obs} n\}_n$, by   Theorems~\ref{t:sobs_valid} and~\ref{t:valid_comp}, 
 function  $\comp 
{\sobs \Obs} \FF$ is 
 valid  for $ \{ \preSMN \Obs n\}_n$. Therefore for any post-fixed point $\R$ of 
 $\comp 
{\sobs \Obs} \FF$, we have ${\R} \subseteq \cap_n \preSMN \Obs n = \preSM \Obs$.
\qed \end{proof} 

Similarly, properties of 
\WP functions, such as closure under composition and union, can be derived from the
properties of \ap functions in the abstract setting (e.g., Proposition~\ref{p:ap}).


 \newpageDS\section{Concluding remarks}
\label{s:concl}

\subsection{Related work}

The idea of developing a  theory of enhancements, for the bisimulation proof method, is
put forward in  \cite{San98MFCS}, introducing the notion of progression 
(roughly the symmetric version  of the semi-progression of Section~\ref{s:frs})
and of  respectful
function (roughly, functions that preserve progressions).  
Later  \cite{pous07:clut,pous:lics16:cawu}
the theory has been both    generalised  to the coinduction proof method, using
complete lattices, and refined, first using a variant of  respectfulness called
compatibility, 
 then  focusing on the largest 
 compatible function, called the
\emph{companion} (that
also  coincides with the largest respectful function). 
The companion can also be obtained using Kleene's
construction of the greatest fixed-point~\cite{ParrowWeber16}.
The theory of enhancements for inductive behavioural relations \cite{Sangiorgi22}
follows the schema for 
the bisimulation  enhancements based on  {respectfulness} and  
{compatibility} \cite{SanPous}.
In contrast, the
 chains of approximants and the \ap functions, on which the abstract theory in Section~\ref{s:cl} is based,
closely resemble the approximations of the greatest fixed-point of monotone functions
from the top element of  the  lattice
using Kleene's
construction, and the corresponding account of `up-to' techniques, as studied in papers such
as~\cite{pous:lics16:cawu,ParrowWeber16,SchaferS17,pr:fossacs17:codensity}.

Abstract formulations of the meaning of coinductive  enhancements have also been given using category
theory. 
The main technical tools are 
final semantics,  coalgebras,  
spans of coalgebra homomorphisms, fibrations, and 
 corecursion schemes; see, e.g.,  
\cite{pr:fossacs17:codensity,bpr:calco17:monoidal:company,RotBBPRS17}.
For  more details and references on coinductive  enhancements, we refer to the 
 technical  survey \cite{SanPous} 
 and to the  historical  review~\cite{PousS19}.

Proof techniques for (weak) bisimilarity akin to the up-to techniques of coinductive
enhancements have also been studied using 
 \emph{unique solutions of equations} and 
special inequations called
\emph{contractions}  \cite{DurierHS19,Sangiorgi17}.
In particular, these techniques offer  one the power of some of the most  important 
 bisimulation enhancements, namely
 `up-to context'.
 Moreover, the authors  transport them  onto 
inductive equivalences and preorders such as 
trace equivalence and trace preorder.

\subsection{Further developments}

We have given an interpretation, using complete lattices and
fixed-points,  of the enhancements and the `up-to' techniques for inductive behavioural
relations \cite{Sangiorgi22}.

In the paper, we have confined ourselves to `strong' semantics,
i.e., to transition systems where all actions are treated in the same manner.
It would be useful to 
 extend the setting
to 
 `weak' forms of behavioural relations,
in which a special action denoting internal activity is  partly or completely ignored.  
Theories of weak coinductive enhancements are rather 
more involved than the strong  theories. 
For instance,
 auxiliary relations such as \emph{expansion}  \cite{SaMi92} and
 \emph{contraction} \cite{Sangiorgi17}  are usually needed.    
Similar issues show up in 
 enhancements     for inductive behavioural relations.
For instance, two forms of weights have been  proposed  
  \cite{Sangiorgi22}, 
 distinguishing the contribution of
internal and visible actions,  with their relative advantages and disadvantages. 
In addition, some weak
behavioural relations make use of state predicates such as \emph{stability}, which do not appear in
the strong case.

A challenging direction might be to study
lifting of the theory presented in this paper to a  probabilistic setting and 
 labelled Markov processes. 
We would like  also to  examine
abstract settings capable of accommodating behavioural relations
defined from both \emph{inductive} and \emph{coinductive} observables. 
Examples of coinductive observables that may appear together with the
inductive ones are
infinite traces and
divergence.





\subsection*{Acknowledgments}
Thanks  to 
 the anonymous referees for 
  useful observations and comments on the content of the paper.  
This work has been partly supported by the
 MIUR-PRIN project 
`Resource Awareness in Programming: Algebra, Rewriting, and
Analysis' (RAP, ID P2022HXNSC).

\subsection*{Personal Note}
I am very glad and honoured to have paper in a volume dedicated to
Rocco De Nicola.  This also gives me an opportunity for expressing my
gratitude to him. Not only for many fruitful scientific discussions,
but also (and possibly even more!)  for his constant and generous
friendship, including frequent pieces of personal advice, beginning
with the recommendation to go to Edinburgh for carrying out a PhD! 
Furthermore, meeting Rocco has always been a guarantee for joyful and
pleasant moments: conversation would quickly and invariably take a
broad span (from science, to life, and to politics...), with Rocco
brilliantly seizing all opportunities for jokes and laughter.
I look forward to having many more such moments in front of us!

\bibliographystyle{splncs03} 
\bibliography{DSbib,OTHERSbib} 

\end{document} 

\bibliographystyle{splncs03} 
\bibliography{DSbib,OTHERSbib}

\end{document}